\newtheorem{prop}{Proposition} 
\newtheorem{lemm}{Lemma} 
\newtheorem{eg}{Example} 
\newcommand{\p}{\partial} 
\newcommand{\ta}{\tau} 
\begin{document}

\title{A new extended $q$-deformed KP hierarchy}
\author{ { Runliang Lin$^\dag$,
Xiaojun Liu$^\ddag$, Yunbo Zeng$^\dag$
\thanks{E-mail addresses: rlin@math.tsinghua.edu.cn (R.L.
Lin), lxj98@mails.tsinghua.edu.cn (X.J. Liu),
yzeng@math.tsinghua.edu.cn (Y.B. Zeng).
}} \\
 {\small $^\dag$ Department of Mathematical Sciences, Tsinghua
University, Beijing 100084, P.R. China}\\
{\small $^\ddag$ Department of Applied Mathematics,
  China Agricultural University,
  Beijing 100083, P.R. China  }
\date{}
}
 \maketitle




\begin{abstract}
A method is proposed in this paper to construct a new extended
$q$-deformed KP ($q$-KP) hiearchy and its Lax representation. This
new extended $q$-KP hierarchy contains two types of $q$-deformed KP
equation with self-consistent sources, and its two kinds of
reductions give the $q$-deformed Gelfand-Dickey hierarchy with
self-consistent sources and the constrained $q$-deformed KP
hierarchy, which include two types of $q$-deformed KdV equation with
sources and two types of $q$-deformed Boussinesq equation with
sources. All of these results reduce to the classical ones when $q$
goes to $1$. This provides a general way to construct (2+1)- and
(1+1)-dimensional $q$-deformed soliton equations with sources and
their Lax representations.

\end{abstract}

PACS numbers: {02.30.Ik}
\vspace{2pc}

\maketitle

\section{Introduction}
In recent years, the $q$-deformed integrable systems attracted many
interests both in mathematics and in physics
\cite{KS,KC,Jimbo,Majid,Zhang,WZZ,FR,Frenkel,HL,AHvM,KLR,MS,Tu-1999a,Tu-1999b,Tu-2000,Iliev-1998a,Iliev-1998b,Iliev-2000,WWWW,HLC}.
The deformation is performed by using the $q$-derivative
$\partial_q$ to take the place of ordinary derivative $\partial_x$
in the classical systems, where $q$ is a parameter, and the
$q$-deformed integrable systems recover the classical ones as
$q\rightarrow 1$. The $q$-deformed $N$-th KdV ($q$-NKdV or
$q$-Gelfand-Dickey) hierarchy, the $q$-deformed KP ($q$-KP)
hierarchy, and the $q$-AKNS-D hierarchy were constructed, and some
of their integrable structures were also studied, such as the
infinite conservation laws, bi-Hamiltonian structure, tau function,
symmetries, B\"acklund transformation (see
\cite{Zhang,KLR,Tu-1999b,Tu-2000,WWWW,HLC} and the references
therein).

Multi-component generalization of an integrable model is a very
important subject
\cite{MR638807,MR723457,MR730247,MR688946,MR2006751,MR1621464,MR1629543,MR1107232,MR1738969}.
For example, the multi-component KP (mcKP) hierarchy given in
\cite{MR638807} contains many physically relevant nonlinear
integrable systems, such as Davey-Stewartson equation,
two-dimensional Toda lattice and three-wave resonant interaction
ones. Another type of coupled integrable systems is the soliton
equation with self-consistent sources, which has many physical
applications and can be obtained by coupling some suitable
differential equations to the original soliton equation
\cite{MR708435,MR910584,MR940618,LZM,ZML,Lin-2006,Hu-Wang,ZhangDJ}.
Very recently, we proposed a systematical procedure to construct a
new extended KP hierarchy and its Lax representation \cite{LZL}.
This new extended KP hierarchy contains two types of KP equation
with self-consistent sources (KPSCS-I and KPSCS-II), and its two
kinds of reductions give the Gelfand-Dickey hierarchy with
self-consistent sources \cite{MR1165512} and the $k$-constrained KP
 hierarchy \cite{MR1117170,MR1185854}.
In fact, the approach
which we proposed in \cite{LZL} in the framework of Sato theory
        can be applied to construct many other extended (2+1)-dimensional soliton hierarchies,
        such as BKP hierarchy, CKP hierarchy and DKP hierarchy,
        and provides a general way to obtain (2+1)-dimensional and (1+1)-dimensional
        integrable soliton hierarchies with self-consistent sources.

The KdV equation with self-consistent sources and the KP equation
with self-consistent sources
    can describe the interaction of long and short waves
    (see \cite{MR708435,MR910584,MR940618,LZM,ZML,Lin-2006,Hu-Wang,ZhangDJ} and the references therein).
    In contrast with the well-studied KdV and KP equation with self-consistent sources,
    the $q$-Gelfand-Dickey hierarchy with self-consistent sources
    and the $q$-KP hierarchy with self-consistent sources have not been investigated yet.
    It is ineresting to consider the case of the algebra of $q$-pseudo-differential operator,
    and to see if our approach could be generalized to construct new extended $q$-deformed integrable systems,
    which would enable us to find two types of new $q$-deformed soliton equation with sources in a systematic way.
    In this paper, we will give a systematical procedure to construct a
new extended $q$-deformed KP ($q$-KP) hierarchy and its Lax
representation. First, we define a new vector filed
$\partial_{\tau_k}$ by a linear combination of all vector fields
$\partial_{t_n}$ in ordinary $q$-deformed KP hierarchy, then we
introduce a new Lax type equation which consists of the
$\tau_k$-flow and the evolutions of wave functions. Under the
evolutions of wave functions, the commutativity of
$\partial_{\tau_k}$-flow and $\partial_{t_n}$-flows gives rise to a
new extended $q$-KP hierarchy. This new extended $q$-KP hierarchy
contains two types of $q$-deformed KP equation with self-consistent
sources ($q$-KPSCS-I and $q$-KPSCS-II), and its two kinds of
reductions give the $q$-deformed Gelfand-Dickey hierarchy with
self-consistent sources and the constrained $q$-deformed KP
hierarchy, which are some $(1+1)$-dimensional $q$-deformed soliton
equation with self-consistent sources, e.g., two types of
$q$-deformed KdV equation with self-consistent sources ($q$-KdVSCS-I
and $q$-KdVSCS-II) and two types of $q$-deformed Boussinesq equation
with self-consistent sources ($q$-BESCS-I and $q$-BESCS-II). The
$q$-KdVSCS-II is just the $q$-deformed Yajima-Oikawa equation. All
of these results reduce to the classical ones when $q \rightarrow
1$. Thus, the method proposed in this paper is a general way to find
the $(1+1)$- and $(2+1)$-dimensional  $q$-deformed soliton equation
with self-consistent sources and their Lax representations. It
should be noticed that a general setting of ``pseudo-differential"
operators
        on regular time scales has been proposed to construct some integrable systems \cite{GGS2005,BSS2008},
        where the $q$-differential operator is just a particular case.
        Our paper will be organized as follows. In section 2, we will recall
some notations in the $q$-calculus and construct the new extended
$q$-KP hierarchy, and then two types of $q$-deformed KP equation
with sources will be presented. In section 3, the two kinds of
reductions for the new extended $q$-KP hierarchy will be considered,
and some $(1+1)$-dimensional $q$-deformed soliton equation with
self-consistent sources will be deduced. In section 4, some
conclusions will be given.

\section{New extended $q$-deformed KP hierarchy}

In this section, we will give a procedure to construct a new
extended $q$-KP hierarchy and its Lax representation. Then, as the
examples, two types of $q$-deformed KP equation with self-consistent
sources ($q$-KPSCS-I and $q$-KPSCS-II) will be presented explicitly.

The $q$-deformed differential operator $\partial_q$ is defined as
    $$ \partial_q(f(x))=\frac{f(qx)-f(x)}{x(q-1)}, $$
which recovers the ordinary differentiation $\partial_x(f(x))$ as
$q\rightarrow 1$. Let us define the $q$-shift operator $\theta$ as
    $$ \theta(f(x))=f(qx). $$
Then we have the $q$-deformed Leibnitz rule
    $$ \partial_q^n f=\sum_{k\ge0}\left(\begin{array}{c}{n}\\{k}\end{array}\right)_q\theta^{n-k}(\partial_q^kf)\partial_q^{n-k},\qquad n\in
     {\mathbb  Z}, $$
where the $q$-number and the $q$-binomial are defined by
    $$ (n)_q=\frac{q^n-1}{q-1},
\qquad
   \left(\begin{array}{c}{n}\\{k}\end{array}\right)_q=\frac{(n)_q(n-1)_q\cdots(n-k+1)_q}{(1)_q(2)_q\cdots(k)_q},\qquad
   \left(\begin{array}{c}{n}\\{0}\end{array}\right)_q=1. $$

For a $q$-pseudo-differential operator ($q$-PDO) of the form
    $$P=\sum\limits_{i=-\infty}^np_i\partial_q^i, $$
we decompose $P$ into the differential part  and the integral part
as follows
    $$ P_+=\sum\limits_{i\ge0}p_i\partial_q^i, \qquad
    P_-=\sum\limits_{i\leq-1}p_i\partial_q^i.$$
The conjugate operation ``$*$'' for $P$ is defined by
    $$ P^*=\sum\limits_i(\partial_q^*)^ip_i, \qquad
    \partial_q^*=-\partial_q\theta^{-1}=-\frac{1}{q}\partial_{\frac{1}{q}}.$$

The $q$-KP hierarchy is defined by the Lax equation (see, e.g.,
\cite{Iliev-1998b})
\begin{equation}
\label{qKP-Lax}
    \partial_{t_n} L = [B_n,L], \qquad B_n=L^n_+,
\end{equation}
with Lax operator of the form
\begin{equation}
    L=\partial_q+\sum_{i=0}^{\infty} u_i \partial_q^{-i}.
\end{equation}
According to the Sato theory, we can express the Lax operator as a
dressed operator
\begin{equation}
    L=S\partial_q S^{-1},
\end{equation}
where $S=1+\sum\limits_{i=1}^\infty S_i\partial_q^{-i}$ is called
the Sato operator and $S^{-1}$ is its formal inverse. The Lax
equation (\ref{qKP-Lax}) is equivalent to the Sato equation
\begin{equation}
\label{eqn:ev-S-q}
    S_{t_n}=-(L^n)_- S.
\end{equation}

The $q$-wave function $w_q(x,\overline t;z)$ and $q$-adjoint wave
function $w^*(x,\overline t;z)$ (here $\overline{t}=(t_1,t_2,t_3,
\ldots)$) are defined as follows
\begin{subequations}
\label{eqns:ev-wave-q}
\begin{equation}
    w_q=S
    e_q(xz)\exp\left(\sum\limits_{i=1}^{\infty}t_iz^i\right),
\end{equation}
\begin{equation}
    w^*=(S^*)^{-1}|_{x/q}
    e_{1/q}(-xz)\exp\left(-\sum\limits_{i=1}^{\infty}t_iz^i\right),
\end{equation}
\end{subequations}
where the notation $P|_{x/t}=\sum\limits_i
p_i(x/t) t^i
\partial_q^i $ (for $P=\sum\limits_i p_i(x)\partial_q^i$) is used,
and
    $$ e_q(x)=\exp\left(\sum_{k=1}^{\infty}\frac{(1-q)^k}{k(1-q^k)}x^k\right). $$
It is easy to show that $w_q$ and $w_q^*$ satisfy the following
linear systems
    $$ Lw_q=z w_q, \qquad \frac{\partial w_q}{\partial t_n}=B_n w_q,  $$
    $$ L^*|_{x/q}w^*_q=zw^*_q, \qquad
    \frac{\partial w^*_q}{\partial t_n}=-(B_n|_{x/q})^*w^*_q. $$

It can be proved that \cite{Tu-1999b}
\begin{equation}
\label{eqns:res-lema}
  T(z)_-\equiv\sum_{i\in{\mathbb  Z}} L^i_ -z^{-i-1}=w_q \partial_q^{-1} \theta(w_q^*).
\end{equation}

For any fixed $k\in{\mathbb  N}$, we define a new variable $\tau_k$
whose vector field is
    $$\partial_{\tau_k}=\partial_{t_k}-\sum_{i=1}^N\sum_{s\ge0}\zeta_i^{-s-1}\partial_{t_s},$$
where $\zeta_i$'s are arbitrary distinct non-zero parameters. The
$\tau_k$-flow is given by
\begin{eqnarray*}
  L_{\tau_k}=\partial_{t_k}L-\sum_{i=1}^N\sum_{s\ge 0}\zeta_i^{-s-1}\partial_{t_s}L
  =[B_k,L]-\sum_{i=1}^N\sum_{s\ge0}\zeta_i^{-s-1}[B_s,L]\\
  =[B_k,L]+\sum_{i=1}^N\sum_{s\in{\mathbb  N}}\zeta_i^{-s-1}[L^s_-,L]
  =[B_k,L]+\sum_{i=1}^N\sum_{s\in{\mathbb  Z}}\zeta_i^{-s-1}[L^s_-,L].
\end{eqnarray*}
Define $\tilde{B}_k$ by
\begin{displaymath}
  \tilde{B}_k=B_k+\sum_{i=1}^N\sum_{s\in{\mathbb  Z}}\zeta_i^{-s-1}L^s_-,
\end{displaymath}
which, according to (\ref{eqns:res-lema}), can be written as
\begin{displaymath}
  \tilde{B}_k=B_k+\sum_{i=1}^N w_q(x,\overline t;\zeta_i)
  \partial_q ^{-1} \theta(w_q^*(x,\overline t;\zeta_i)).
\end{displaymath}
By setting $\phi_i=w_q(x,\overline t;\zeta_i)$,
$\psi_i=\theta(w_q^*(x,\overline t;\zeta_i))$, we have
\begin{subequations}
\begin{equation}
  \tilde{B}_k=B_k+\sum_{i=1}^N \phi_i\partial_q^{-1} \psi_i,
\end{equation}
where $\phi_i$ and $\psi_i$ satisfy the following equations
\begin{equation}
  \phi_{i,t_n}=B_n(\phi_i),\qquad \psi_{i,t_n}=-B^*_n(\psi_i), \qquad i=1,\cdots,N.
\end{equation}
\end{subequations}
Now we introduce a new Lax type equation given by
\begin{subequations}
\label{eqns:qKP-nLax}
\begin{equation}
\label{eqns:qKP-nLax-LBn}
  L_{\tau_k}=[B_k+\sum_{i=1}^N \phi_i\partial_q^{-1}\psi_i,L].
\end{equation}
with
\begin{equation}
\label{eqns:q-r-qKP}
  \phi_{i,t_n}=B_n(\phi_i),\qquad \psi_{i,t_n}=-B_n^*(\psi_i),\qquad i=1,\cdots,N.
\end{equation}
\end{subequations}
We have the following lemma
\begin{lemm}
  \label{lemm} $[B_n,\phi\partial_q^{-1}\psi]_-=B_n(\phi)\partial_q^{-1}\psi-\phi\partial_q^{-1}B_n^*(\psi)$.
\end{lemm}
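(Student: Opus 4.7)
The plan is to prove the identity by splitting the commutator and analyzing each piece as a $q$-PDO:
\[
[B_n,\phi\partial_q^{-1}\psi] = B_n\phi\partial_q^{-1}\psi \;-\; \phi\partial_q^{-1}\psi\, B_n,
\]
and extracting the negative-order parts. Since $B_n$ is a purely $q$-differential operator ($B_n=L^n_+=\sum_{i\ge 0} b_i\partial_q^i$), both pieces can be unwrapped using the $q$-Leibniz rule stated in the preliminaries.

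For the first piece I would expand $B_n\cdot\phi$ as an operator composition. The $q$-Leibniz rule $\partial_q^i\phi=\sum_{k=0}^{i}\binom{i}{k}_q\theta^{i-k}(\partial_q^k\phi)\,\partial_q^{i-k}$ isolates the contribution $k=i$, which sums to the \emph{function} $B_n(\phi)$, while all remaining terms are bundled into an operator $R=\sum_{j\ge 1}r_j\partial_q^{j}$ of strictly positive order. Then $B_n\phi\,\partial_q^{-1}\psi=B_n(\phi)\,\partial_q^{-1}\psi+R\,\partial_q^{-1}\psi$. The first summand is manifestly of negative order. For the second, note $r_j\partial_q^{j}\cdot\partial_q^{-1}\psi=r_j\,\partial_q^{j-1}\psi$, and for every $j\ge 1$ the expression $\partial_q^{j-1}\psi$ expands by $q$-Leibniz into a sum of operators with non-negative powers of $\partial_q$. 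Hence $(R\,\partial_q^{-1}\psi)_-=0$ and $(B_n\phi\,\partial_q^{-1}\psi)_-=B_n(\phi)\partial_q^{-1}\psi$.

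For the second piece I would run a parallel argument via the conjugate operation. Because $*$ is defined by $\partial_q^*=-\partial_q\theta^{-1}$, it sends $\partial_q^i$ into an operator of the \emph{same} degree (positive or negative), so $*$ commutes with taking the negative part: $(P^*)_-=(P_-)^*$. Applying $*$ to $\phi\partial_q^{-1}\psi B_n$ and using $(AB)^*=B^*A^*$ and $(\partial_q^{-1})^*=(\partial_q^*)^{-1}$ turns the problem into exactly the situation of the first piece but with $B_n^*$ acting on $\psi$. That calculation yields $(\phi\partial_q^{-1}\psi B_n)_-=\phi\,\partial_q^{-1}B_n^*(\psi)$. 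Subtracting the two results gives the lemma.

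The routine portion is the $q$-Leibniz bookkeeping for the first piece; the main obstacle is the second piece, where I must carefully track the $\theta$-shifts and signs produced by powers of $\partial_q^*$ to confirm that the conjugate trick indeed reproduces $\phi\,\partial_q^{-1}B_n^*(\psi)$ with the correct sign and no stray $\theta$-factor. If this bookkeeping turns out delicate, the fallback is a direct expansion: insert $\partial_q^{-1}\psi=\sum_{k\ge 0}\binom{-1}{k}_q\theta^{-1-k}(\partial_q^k\psi)\,\partial_q^{-1-k}$, commute each $\partial_q^{-1-k}$ through $b_i\partial_q^i$ via $q$-Leibniz, and collect the terms of order $\le -1$, which — after re-resumming against the definition of $B_n^*$ — must telescope to $\phi\,\partial_q^{-1}B_n^*(\psi)$.
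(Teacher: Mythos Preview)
Your handling of the first piece, $(B_n\phi\,\partial_q^{-1}\psi)_-=B_n(\phi)\,\partial_q^{-1}\psi$, is correct and coincides with what the paper does (stated there in one line after reducing to a monomial $P=a\partial_q^n$).

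For the second piece the paper takes a different and much shorter route than either your conjugate trick or your fallback expansion. It never invokes $*$ as an anti-isomorphism; instead it uses the single-step identity
\[
\partial_q^{-1}\,g\,\partial_q \;=\; \theta^{-1}(g)\;-\;\partial_q^{-1}\bigl(\partial_q\theta^{-1}(g)\bigr),
\]
which follows from the $q$-Leibniz rule $\partial_q\circ h=(\partial_q h)+\theta(h)\,\partial_q$ with $h=\theta^{-1}(g)$. Since $\theta^{-1}(g)\,\partial_q^{\,n-1}$ is of non-negative order, one obtains
\[
\bigl(\phi\,\partial_q^{-1}(a\psi)\,\partial_q^{\,n}\bigr)_-
=\bigl(\phi\,\partial_q^{-1}\bigl(-\partial_q\theta^{-1}(a\psi)\bigr)\,\partial_q^{\,n-1}\bigr)_-,
\]
and after $n$ iterations this is exactly $\phi\,\partial_q^{-1}(-\partial_q\theta^{-1})^{n}(a\psi)=\phi\,\partial_q^{-1}P^{*}(\psi)$. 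The operator $\partial_q^{*}=-\partial_q\theta^{-1}$ appears automatically, with no expansion of $\partial_q^{-1}\psi$ and no resummation.

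Your conjugate argument is a legitimate alternative in spirit, but the key claim $(P^{*})_-=(P_-)^{*}$ is more delicate than your one-line justification suggests. Because $\partial_q^{*}=-q^{-1}\partial_{1/q}$, the image $P^{*}$ is naturally a $q^{-1}$-pseudodifferential operator, not a $q$-PDO; equivalently, $\theta^{-1}$ is \emph{not} an element of the $q$-PDO algebra (the formal inverse of $1+(q-1)x\,\partial_q$ diverges in positive powers). Thus the two projections $(\cdot)_-$ you are equating live in different algebras, and one must argue that $*$ intertwines them and then carry the extra factor $(\partial_q^{-1})^{*}=-\theta\,\partial_q^{-1}$ back through. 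This is exactly the ``$\theta$-shifts and signs'' bookkeeping you already flag as the main obstacle; once done it works, but it is no shorter than the paper's telescoping, which never leaves the $q$-PDO algebra and sidesteps the issue entirely.
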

\begin{proof}
  Without loss of generality, we consider a monomial: $P=a\partial_q^n$ ($n\ge 1$). Then
  \begin{displaymath}
    [P,\phi\partial_q^{-1}\psi]_-=a (\partial_q^n(\phi))\partial_q^{-1}\psi- (\phi\partial_q^{-1}\psi a\partial_q^n)_-.
  \end{displaymath}
  Notice that the second term can be rewritten in the following way
  \begin{eqnarray*}
    (\phi\partial_q^{-1}\psi a\partial_q^{n})_-
    =\phi(\theta^{-1}(\psi a))\partial_q^{n-1}
        -\phi\partial_q^{-1}(\partial_q\theta^{-1}(a\psi))\partial_q^{n-1})_- \\
    =(\phi\partial_q^{-1}(-\partial_q\theta^{-1}(a\psi))\partial_q^{n-1})_-
    =\cdots
    =\phi\partial_q^{-1}\left((-\partial_q\theta^{-1})^n
    (a\psi)\right)=\phi\partial_q^{-1}P^*(\psi),
  \end{eqnarray*}
  then the lemma is proved.
\end{proof}

\begin{prop}
  (\ref{qKP-Lax}) and (\ref{eqns:qKP-nLax}) give rise to the following new extended $q$-deformed KP hierarchy
\begin{subequations}
    \label{eqns:nmcqKP}
    \begin{eqnarray}
      &B_{n,\tau_k}-(B_k+\sum_{i=1}^N\phi_i\partial_q^{-1}\psi_i)_{t_n}
      +[B_n,B_k+\sum_{i=1}^N\phi_i\partial_q^{-1}\psi_i]=0\label{eqn:nmcqKP-zc}\\
      &\phi_{i,t_n}=B_n(\phi_i),\label{eqn:nmcqKP-q}\\
      &\psi_{i,t_n}=-B_n^*(\psi_i),\quad i=1,\cdots,N.\label{eqn:nmcqKP-r}
    \end{eqnarray}
 \end{subequations}
\end{prop}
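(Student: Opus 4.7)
The plan is to verify the zero-curvature equation (\ref{eqn:nmcqKP-zc}) by direct computation, since (\ref{eqn:nmcqKP-q}) and (\ref{eqn:nmcqKP-r}) are merely the auxiliary equations (\ref{eqns:q-r-qKP}) carried over and require no argument. Throughout I would write $Z := \sum_{i=1}^{N}\phi_i\partial_q^{-1}\psi_i$, so $\tilde{B}_k = B_k + Z$.

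First I would compute each of the three constituents of (\ref{eqn:nmcqKP-zc}) separately. From the new Lax equation (\ref{eqns:qKP-nLax-LBn}) one has $(L^n)_{\tau_k} = [\tilde{B}_k, L^n]$, and projecting onto the differential part gives $B_{n,\tau_k} = [\tilde{B}_k, L^n]_+$; similarly, from (\ref{qKP-Lax}) one has $B_{k,t_n} = [B_n, L^k]_+$. Differentiating $\tilde{B}_k$ in $t_n$ and substituting (\ref{eqn:nmcqKP-q})--(\ref{eqn:nmcqKP-r}) produces the sum $\sum_i\bigl(B_n(\phi_i)\partial_q^{-1}\psi_i - \phi_i\partial_q^{-1}B_n^*(\psi_i)\bigr)$, which Lemma \ref{lemm} identifies term-by-term with $[B_n, Z]_-$. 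Hence $\tilde{B}_{k,t_n} = [B_n, L^k]_+ + [B_n, Z]_-$.

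Next I would assemble $B_{n,\tau_k} - \tilde{B}_{k,t_n} + [B_n, \tilde{B}_k]$ and split the expression into a $Z$-free block $[B_k, L^n]_+ - [B_n, L^k]_+ + [B_n, B_k]$ and a $Z$-dependent block $[Z, L^n]_+ - [B_n, Z]_- + [B_n, Z]$. The $Z$-free block is precisely the standard Zakharov--Shabat identity for the $q$-KP hierarchy (a consequence of (\ref{qKP-Lax})) and therefore vanishes. For the $Z$-dependent block, writing $[B_n, Z] = [B_n, Z]_+ + [B_n, Z]_-$ and decomposing $L^n = B_n + L^n_-$ collapses it to $-[L^n_-, Z]_+$.

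The final step is the observation that both $L^n_-$ and $Z$ are $q$-PDOs of order $\le -1$, and the $q$-Leibnitz rule cannot raise the order of a product; so $L^n_- Z$ and $Z L^n_-$ both have order $\le -2$, giving $[L^n_-, Z]_+ = 0$ and completing the verification. The only real obstacle is bookkeeping: keeping careful track of which pieces of each commutator land in the differential part versus the integral part, and applying Lemma \ref{lemm} in the correct direction. Once those projections are handled cleanly, the proposition reduces to the classical $q$-KP Zakharov--Shabat identity plus the elementary closure of purely integral $q$-PDOs under multiplication.
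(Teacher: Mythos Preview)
Your proposal is correct and follows essentially the same route as the paper: both compute $B_{n,\tau_k}=[\tilde B_k,L^n]_+$ from (\ref{eqns:qKP-nLax-LBn}), use Lemma~\ref{lemm} together with (\ref{eqn:nmcqKP-q})--(\ref{eqn:nmcqKP-r}) to identify $Z_{t_n}$ with $[B_n,Z]_-$, and rely on $[L^n_-,Z]_+=0$. The only organizational difference is that the paper carries out the manipulation in one chain (deriving the Zakharov--Shabat identity $[B_k,L^n_-]_+=[B_n,L^k]_+$ inline), whereas you separate the computation into a $Z$-free block handled by citing that identity and a $Z$-dependent block handled by the order argument; the substance is the same.
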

\begin{proof}
  We will show that under (\ref{eqns:q-r-qKP}), (\ref{qKP-Lax}) and (\ref{eqns:qKP-nLax-LBn}) give rise to
  (\ref{eqn:nmcqKP-zc}). For convenience, we assume $N=1$, and denote $\phi_1$ and $\psi_1$ by $\phi$ and $\psi$,
  respectively.
  By  (\ref{qKP-Lax}), (\ref{eqns:qKP-nLax}) and Lemma~\ref{lemm}, we have
  \begin{eqnarray*}
    B_{n,\tau_k}=(L^n_{\tau_k})_+=[B_k+\phi\partial_q^{-1}\psi,L^n]_+
    \\
    =[B_k+\phi\partial_q^{-1}\psi,L^n_+]_+ +[B_k+\phi\partial_q^{-1}\psi,L^n_-]_+\\
    =[B_k+\phi\partial_q^{-1}\psi,L^n_+]-[B_k+\phi\partial_q^{-1}\psi,L^n_+]_-+[B_k,L^n_-]_+\\
    =[B_k+\phi\partial_q^{-1}\psi,B_n]-[\phi\partial_q^{-1}\psi,B_n]_-+[B_n,L^k]_+\\
    =[B_k+\phi\partial_q^{-1}\psi,B_n]+B_n(\phi)\partial_q^{-1}\psi-\phi\partial_q^{-1}B_n^*(\psi)+B_{k,t_n}\\
    =[B_k+\phi\partial_q^{-1}\psi,B_n]+(B_k+\phi\partial_q^{-1}\psi)_{t_n}.
  \end{eqnarray*}
\end{proof}
Under (\ref{eqn:nmcqKP-q}) and (\ref{eqn:nmcqKP-r}), the Lax
representation for (\ref{eqn:nmcqKP-zc}) is given by
\begin{subequations}
\begin{eqnarray}
  \Psi_{\tau_k}=(B_k+\sum_{i=1}^N\phi_i\partial_q^{-1}\psi_i)(\Psi), \label{eqn:qKP-Lax-1}\\
  \Psi_{t_n}=B_n(\Psi). \label{eqn:qKP-Lax-2}
\end{eqnarray}
\end{subequations}

\parindent=0pt
{\bf Remark 1} The main step in our approach is to define a new Lax
equation (\ref{eqns:qKP-nLax}).
        For the extended KP hierarchy in \cite{LZL}, a similar formula like (\ref{eqns:qKP-nLax})
        can be motivated by the well-known $k$-constraint of KP hierarchy,
        which is obtained by imposing
        $ L^k=B_k+\sum\limits_{i=1}^N \phi_i\partial^{-1}\psi_i$.
        Here, the formula (\ref{eqns:qKP-nLax}) can also be motivated by
        the $k$-constraint of $q$-KP hierarchy as given in \cite{Tu-1999b}.
        This enables us to obtain the $k$-constrained $q$-KP hierarchy
        and the $q$-Gelfand-Dickey hierarchy with sources by dropping the $\tau_k$-dependence
        and $t_n$-dependence in the new extended $q$-KP hierarchy (\ref{eqns:nmcqKP}) respectively (see Section 3).

\parindent=0pt
{\bf Remark 2} When taking $\phi_i=\psi_i=0$, $i=1,\ldots,N$,
    then the extended $q$-KP hierarchy (\ref{eqns:nmcqKP}) reduces to the $q$-KP hierarchy.

\parindent=0pt
{\bf Remark 3} Integrable systems can be constructed
        from the algebra of ``pseudo-differential" operators on regular time scales in \cite{GGS2005,BSS2008},
        where the algebra of $q$-``pseudo-differential" operator is a particular case.
        In fact, our approach for constructing new extended integrable systems
        can also be generalized to the general setting as in \cite{GGS2005,BSS2008}.

For convenience, we write out some operators here
\begin{eqnarray*}
B_1=\p_q+ u_0, \qquad 
B_2=\p^2_q + v_1\p_q+ v_0,\qquad 
B_3=\p^3_q + {s}_2\p^2_q+ {s}_1\p_q+ {s}_0,
\end{eqnarray*}
\begin{eqnarray*}
\phi_i\partial_q^{-1}\psi_i =
r_{i1}\partial_q^{-1}+r_{i2}\partial_q^{-2}+r_{i3}\partial_q^{-3}+\ldots,
\qquad i=1,\ldots,N,
\end{eqnarray*}
where
\begin{eqnarray*}
v_1=\theta(u_0)+u_0, \qquad 
v_0= (\p_q u_0)+\theta(u_{1})+u_0^2+u_{1},\\ 
v_{-1}=(\p_q u_{1})+\theta(u_{2})+u_0u_{1}+u_{1}\theta^{-1}(u_0)+u_{2},
\end{eqnarray*}
\begin{eqnarray*}
{s}_2=\theta(v_1)+ u_0, \qquad 
 {s}_1=(\p_q v_1) + \theta(v_0)+ u_0v_1+u_{1}, \\ 
 {s}_0=(\p_q v_0) + \theta(v_{-1}) + u_0v_0+u_{1}\theta^{-1}(v_{1})+u_{2}. 
\end{eqnarray*}
\begin{eqnarray*}
 {r}_{i1}=\phi_i\theta^{-1}(\psi_i), \qquad 
 {r}_{i2}=-\frac{1}{q} \phi_i \theta^{-2}(\partial_q\psi_i), \qquad 
 {r}_{i3}=\frac{1}{q^3}\phi_i\theta^{-3}(\partial_q^2\psi_i). 
\end{eqnarray*}
and $v_{-1}$ comes from $L^2=B_2+ v_{-1}\p^{-1}_q+
v_{-2}\p^{-2}_q+\cdots$.

Then, one can compute the following commutators
    $$ [B_2,B_3]=f_2\partial_q^2+f_1\partial_q+f_0, \qquad
        [B_2,\phi_i\partial_q^{-1}\psi_i]=g_{i1}\partial_q+g_{i0}+\ldots,$$
    $$ [B_3,\phi_i\partial_q^{-1}\psi_i]=h_{i2}\partial_q^2+h_{i1}\partial_q+h_{i0}+\ldots,\qquad i=1,\ldots,N,$$
where
\begin{eqnarray*}
 f_2=\partial_q^2 s_2 
+(q+1)\theta(\partial_q s_1) +\theta^2(s_0) 
+v_1 \partial_q s_2 
+v_1 \theta(s_1) 
+v_0 s_2 
-(q^2+q+1) \theta(\partial_q^2 v_1) \\ 
-(q^2+q+1) \theta^2(\partial_q v_0 ) 
-(q+1) s_2 \theta(\partial_q v_1 ) 
-s_2 \theta^2(v_0) 
-s_1 \theta(v_1)-s_0, 
\end{eqnarray*}
\begin{eqnarray*}
 f_1=\partial_q^2 s_1 +(q+1)\theta(\partial_q s_0 ) 
+v_1\partial_q s_1 
+v_1\theta(s_0) 
+v_0s_1 -\partial_q^3 v_1 
-(q^2+q+1)\theta( \partial_q^2 v_0 ) \\
-s_2\partial_q^2 v_1  
-(q+1)s_2\theta( \partial_q v_0 ) 
-s_1\partial_q v_1  
-s_1\theta( v_0) 
-s_0v_1, 
\end{eqnarray*}
\begin{eqnarray*}
 f_0=\partial_q^2  s_0  +v_1 \partial_q s_0  
-\partial_q^3 v_0  -s_2 \partial_q^2 v_0  
-s_1 \partial_q v_0, 
\end{eqnarray*}
\begin{eqnarray*}
 g_{i1}=\theta^2(r_{i1}) -r_{i1},
\qquad
     g_{i0}=(q+1) \theta( \partial_q r_{i1} )
+\theta^2(r_{i2}) 
+v_1 \theta( r_{i1}) 
-r_{i1} \theta^{-1}(v_1) 
-r_{i2}, 
\end{eqnarray*}
\begin{eqnarray*}
 h_{i2}= \theta^3(r_{i1})-r_{i1},  
\qquad
     h_{i1}=(q^2+q+1) \theta^2(\partial_q r_{i1} )  
+\theta^3(r_{i2})+s_2 \theta^2(r_{i1}) 
-r_{i1} \theta^{-1}(s_2). 
\end{eqnarray*}
\begin{eqnarray*}
 h_{i0}=(q^2+q+1) \theta( \partial_q^2 r_{i1} ) 
+(q^2+q+1) \theta^2(\partial_q r_{i2} )
+\theta^3(r_{i3}) 
+(q+1) s_2  \theta( \partial_q r_{i1} )  \\ 
+s_2  \theta^2(r_{i2}) 
+s_1  \theta( r_{i1}) 
-r_{i1} \theta^{-1}( s_1 ) 
+\frac{1}{q} r_{i1} \theta^{-2}(\partial_q s_2 ) 
-r_{i2} \theta^{-2}(s_2 ) 
-r_{i3}. 
\end{eqnarray*}

 Now, we list some examples in the new extended $q$-KP
hierarchy (\ref{eqns:nmcqKP}).

\begin{eg} [The first type of $q$-KPSCS ($q$-KPSCS-I)]
  For $n=2$ and $k=3$, (\ref{eqns:nmcqKP}) yields the first type of
  $q$-deformed KP equation with self-consistent sources ($q$-KPSCS-I) as
  follows
 \begin{subequations}
   \label{eqns:qKPSCS-1}
    \begin{eqnarray}
      & -\frac{\partial s_2}{\partial t_2}+f_2=0, \\ 
      & \frac{\partial v_1}{\partial \tau_3}-\frac{\partial s_1}{\partial t_2}+f_1+\sum_{i=1}^N g_{i1}=0, \\
      & \frac{\partial v_0}{\partial \tau_3}-\frac{\partial s_0}{\partial t_2}+f_0+\sum_{i=1}^N g_{i0}=0, \\
      & \phi_{i,t_2} = B_2(\phi_i), \qquad
      \psi_{i,t_2}=-B_2^*(\psi_i), \qquad i=1,\ldots,N.
    \end{eqnarray}
\end{subequations}
The Lax representation for (\ref{eqns:qKPSCS-1}) is
 \begin{subequations}
\begin{eqnarray}
  \Psi_{\tau_3}=(\partial_q^3+s_2\partial_q^2+s_1\partial_q+s_0+\sum_{i=1}^N\phi_i\partial_q^{-1}\psi_i)(\Psi),\\
  \Psi_{t_2}=(\partial_q^2+v_1\partial_q+v_0)(\Psi).
\end{eqnarray}
 \end{subequations}

Let $q\rightarrow 1$ and $u_0\equiv 0$, then the $q$-KPSCS-I
(\ref{eqns:qKPSCS-1}) reduces to the first type of KP equation with
self-consistent sources (KPSCS-I) which reads as
\cite{MR708435,MR910584}
 \begin{subequations}
    \begin{eqnarray}
       u_{1,t_2}-u_{1,xx}-2u_{2,x}=0, \\ 
      2 u_{1,\tau_3}-3 u_{2,t_2}-3 u_{1,x,t_2}
    + u_{1,xxx}+3 u_{2,xx}-6 u_1 u_{1,x}
    +2 \partial_x\sum_{i=1}^N  \phi_i \psi_i=0,\\
      \phi_{i ,t_2}-\phi_{i ,xx}-2 u_1  \phi_i=0, \\
      \psi_{i ,t_2}+ \psi_{i ,xx}+2 u_1  \psi_i=0, \qquad
i=1,\ldots,N.
    \end{eqnarray}
 \end{subequations}
\end{eg}

\begin{eg}[The second type of $q$-deformed KPSCS ($q$-KPSCS-II)]
  For $n=3$ and $k=2$, (\ref{eqns:nmcqKP}) yields the second type of
  $q$-deformed KP equation with self-consistent sources ($q$-KPSCS-II) as
  follows
  \begin{subequations}
  \label{eqns:qKPSCS-2}
    \begin{eqnarray}
      & \frac{\partial s_2}{\partial \tau_2}-f_2+\sum_{i=1}^N h_{i2}=0,\\
      & \frac{\partial s_1}{\partial \tau_2}-\frac{\partial v_1}{\partial t_3}-f_1+\sum_{i=1}^N h_{i1}=0,\\
      & \frac{\partial s_0}{\partial \tau_2}-\frac{\partial v_0}{\partial t_3}-f_0+\sum_{i=1}^N h_{i0}=0, \\
      & \phi_{i,t_3} = B_3(\phi_i), \qquad
      \psi_{i,t_3}=-B_3^*(\psi_i), \qquad i=1,\ldots,N.
    \end{eqnarray}
 \end{subequations}
The Lax representation for (\ref{eqns:qKPSCS-2}) is
\begin{subequations}
\begin{eqnarray}
  \Psi_{\tau_2}=(\partial_q^2+v_1\partial_q+v_0+\sum_{i=1}^N\phi_i\partial_q^{-1}\psi_i)(\Psi),\\
  \Psi_{t_3}=(\partial_q^3+s_2\partial_q^2+s_1\partial_q+s_0)(\Psi).
\end{eqnarray}
\end{subequations}
Let $q\rightarrow 1$ and $u_0\equiv 0$, then the $q$-KPSCS-II
(\ref{eqns:qKPSCS-2}) reduces to the second type of KP equation with
self-consistent sources (KPSCS-II) which reads as \cite{MR708435}
  \begin{subequations}
    \begin{eqnarray}
       u_{1,\tau_2}-u_{1,xx}-2 u_{2,x}  + \partial_x\sum_{i=1}^N \phi_i \psi_i=0, \\ 
       3  u_{2,\tau_2}+3  u_{1,x,\tau_2}-2  u_{1,t_3}
      - u_{1,xxx}+6 u_1  u_{1,x}-3  u_{2,xx}
    +3 \partial_x\sum_{i=1}^N  \phi_{i,x} \psi_i=0, \\ 
       \phi_{i,t_3}-\phi_{i,xxx}
     -3 u_1  \phi_{i,x}
     -3 (u_{1,x}+u_2 ) \phi_i=0, \\ 
      \psi_{i ,t_3}-\psi_{i ,xxx}
     -3 u_1  \psi_{i ,x}+3 u_2  \psi_i=0, \qquad i=1,\ldots,N.
    \end{eqnarray}
  \end{subequations}
\end{eg}

\section{Reductions}
\label{sec:reductions} The new extended $q$-deformed KP hierarchy
(\ref{eqns:nmcqKP}) admits reductions to several well-known
$q$-deformed $(1+1)$-dimensional systems.

\subsection{The $n$-reduction of (\ref{eqns:nmcqKP})}
The $n$-reduction is given by
\begin{equation}
  \label{eqn:n-redu-q}
  L^n=B_n \qquad \mbox{or}\qquad L^n_-=0,
\end{equation}
then (\ref{eqns:ev-wave-q}) implies that
 \begin{subequations}
  \label{eqns:ev-wave-redu-q}
  \begin{eqnarray}
    &B_n(\phi_i)=L^n \phi_i=\zeta_i^n \phi_i,\\
    &-B_n^*(\psi_i)=-L^{n*}\psi_i=-\zeta_i^n\psi_i.
  \end{eqnarray}
\end{subequations}
By using Lemma 1 and (\ref{eqns:ev-wave-redu-q}), we can see that
the constraint (\ref{eqn:n-redu-q}) is invariant under the $\tau_k$
flow
\begin{eqnarray}
  (L^n_-)_{\tau_k}=[B_k,L^n]_-+\sum_{i=1}^N[\phi_i\partial_q^{-1}\psi_i,L^n]_-\nonumber\\
  =[B_k,L^n_-]_-+\sum_{i=1}^N[\phi_i\partial_q^{-1}\psi_i,L^n_+]_-+\sum_{i=1}^N[\phi_i\partial_q^{-1}\psi_i,L^n_-]_-\nonumber\\
  =\sum_{i=1}^N [\phi_i\partial_q^{-1}\psi_i,B_n]_-
  =-\sum_{i=1}^N (\phi_{i,t_n}\partial_q^{-1}\psi_i+\phi_i\partial_q^{-1}\psi_{i,t_n})\nonumber\\
  =-\sum_{i=1}^N(\zeta_i^n\phi_i\partial_q^{-1}\psi_i-\zeta_i^n\phi_i\partial_q^{-1}\psi_i)=0.\label{eqn:n-invar-q}
\end{eqnarray}
The equations (\ref{eqn:n-redu-q}) and (\ref{eqn:ev-S-q}) imply that
$S_{t_n}=0$, so $(L^k)_{t_n}=0$, which together with
(\ref{eqn:n-invar-q}) means that one can drop $t_n$ dependency from
(\ref{eqns:nmcqKP}) and obtain
 \begin{subequations}
  \label{eqs:GDHSCS-q}
  \begin{eqnarray}
    B_{n,\ta_k}=[(B_n)^{\frac k n}_++\sum_{i=1}^N \phi_i\partial_q^{-1}\psi_i,B_n],\\
    B_n(\phi_i)=\zeta_i^n\phi_i,\\
    B_n^*(\psi_i)=\zeta_i^n\psi_i,\quad i=1,\cdots,N.
  \end{eqnarray}
 \end{subequations}
The system (\ref{eqs:GDHSCS-q}) is the $q$-deformed Gelfand-Dickey
hierarchy with self-consistent sources.

\begin{eg}[The firs tyep of $q$-deformed KdVSCS ($q$-KdVSCS-I)]
For $n=2$ and $k=3$, (\ref{eqs:GDHSCS-q}) presents the first type of
$q$-deformed KdV equation with self-consistent sources
($q$-KdVSCS-I)
 \begin{subequations}
\label{eqns:qKdVSCS-I}
\begin{eqnarray}
  & v_{1,\tau_3}+f_1+\sum_{i=1}^N g_{i1}=0 , \\
  & v_{0,\tau_3}+f_0+\sum_{i=1}^N g_{i0}=0, \\
  & u_2+\theta(u_2)+\partial_q(u_1)+u_0 u_1+u_1 \theta^{-1}(u_0)=0, \\
  & (\partial_q^2 +v_1 \partial_q  +v_0)  (\phi_i)-\zeta^2 \phi_i=0, \\
  & (\partial_q^2 +v_1 \partial_q  +v_0)^* (\psi_i) -\zeta^2 \psi_i=0,\qquad i=1,\cdots,N,
\end{eqnarray}
\end{subequations}
with the Lax representation
\begin{eqnarray*}
\Psi_{\tau_3}=(\partial_q^3+s_2\partial_q^2+s_1\partial_q+s_0+\sum_{i=1}^N
  \phi_i\partial_q^{-1}\psi_i)(\Psi),\\
 (\partial_q^2+v_1\partial_q+v_0)(\Psi)=\lambda\Psi,\qquad
  u_2+\theta(u_2)+\partial_q(u_1)+u_0 u_1+u_1 \theta^{-1}(u_0)=0.
\end{eqnarray*}
Let $q\rightarrow 1$ and $u_0\equiv 0$, then the $q$-KdVSCS-I
(\ref{eqns:qKdVSCS-I}) reduces to the first type of KdV equation
with self-consistent sources (KdVSCS-I) which reads as
\begin{eqnarray*}
  & u_2=-\frac{1}{2} u_{1,x},\\
  &   u_{1,\tau_3}-3 u_1  u_{1,x}
    - \frac {1}{4}  u_{1,xxx}+  \partial_x\sum_{i=1}^N \phi_i \psi_i=0,\\
  & \phi_{i ,xx}+2 u_1  \phi_i -\zeta^2 \phi_i=0,\\
  & \psi_{i ,xx}+2 u_1  \psi_i -\zeta^2 \psi_i=0,\qquad i=1,\cdots,N. 
\end{eqnarray*}
The first type of KdV equation with self-consistent sources
(KdVSCS-I) can be solved by the inverse scattering method
\cite{MR940618,LZM} or by the Darboux transformation (see
\cite{Lin-2006} and the references therein).

\end{eg}

\begin{eg} [The first type of $q$-BESCS ($q$-BESCS-I)]

For $n=3$ and $k=2$, (\ref{eqs:GDHSCS-q}) presents the first type of
$q$-deformed Boussinesq equation with self-consistent sources
($q$-BESCS-I)
 \begin{subequations}
\label{eqns:qBESCS-I}
\begin{eqnarray}
  & s_{2,\tau_2}-f_2+\sum_{i=1}^N h_{i2}=0, \\
  & s_{1,\tau_2}-f_1+\sum_{i=1}^N h_{i1}=0, \\
  & s_{0,\tau_2}-f_0 +\sum_{i=1}^N h_{i0} =0, \\
  & (\partial_q^3 +s_2 \partial_q^2+s_1 \partial_q +s_0) (\phi_i)-\zeta^3 \phi_i=0,\\
  & (\partial_q^3 +s_2 \partial_q^2+s_1 \partial_q +s_0)^* (\psi_i)-\zeta^3 \psi_i=0,\qquad   i=1,\cdots,N,
\end{eqnarray}
\end{subequations}
with the Lax representation
\begin{displaymath}
 \Psi_{\tau_2}=(\partial_q^2+v_1\partial_q+v_0+\sum_{i=1}^N
\phi_i\partial_q^{-1}\psi_i)(\Psi), \qquad
  (\partial_q^3+s_2\partial_q^2+s_1\partial_q+s_0)(\Psi)=\lambda\Psi.
\end{displaymath}
Let $q\rightarrow 1$ and $u_0\equiv 0$, then the $q$-BESCS-I
(\ref{eqns:qBESCS-I}) reduces to the first type of Boussinesq
equation with self-consistent sources (BESCS-I) which reads as
\begin{eqnarray*}
  & -2  u_{2,x}- u_{1,xx}+ u_{1,\tau_2}  + \partial_x\sum_{i=1}^N  \phi_i \psi_i=0, \\
  & 3 u_{2 ,\tau_2}-3 u_{2 ,xx}+3 u_{1 ,x,\tau_2}  +6 u_1  u_{1 ,x}- u_{1 ,xxx}
  +3 \partial_x \sum_{i=1}^N  \phi_{i ,x} \psi_{i}=0, \\
  & \phi_{i ,xxx}  +3 u_1  \phi_{i ,x}
  +3 (u_{1 ,x}+u_2 ) \phi_i -\zeta^3 \phi_i=0, \\
  &  \psi_{i ,xxx}  +3 u_1  \psi_{i ,x}-3 u_2  \psi_i +\zeta^3 \psi_i=0,\qquad   i=1,\cdots,N.
\end{eqnarray*}
\end{eg}

\subsection{The $k$-constrained hierarchy of (\ref{eqns:nmcqKP})}
The $k$-constraint is given by \cite{MR1117170,MR1185854}
\begin{equation}
    L^k=B_k+\sum_{i=1}^N \phi_i\partial_q^{-1}\psi_i.
\end{equation}
By using the above $k$-constraint, it can be proved
that $L$ and $B_n$ are independent of $\tau_k$. By dropping $\tau_k$
dependency from (\ref{eqns:nmcqKP}), we get
 \begin{subequations}
  \label{eqs:k-constrained-q}
  \begin{eqnarray}
    &\left(B_k+\sum_{i=1}^N \phi_i\partial_q^{-1}\psi_i\right)_{t_n}
    =\left[(B_k+\sum_{i=1}^N \phi_i\partial_q^{-1}\psi_i)^{\frac n k}_+,
        B_k+\sum_{i=1}^N \phi_i\partial_q^{-1}\psi_i\right],\\
    &\phi_{i,t_n}=(B_k+\sum_{j=1}^N \phi_j\partial_q^{-1}\psi_j)^{\frac n k}_+(\phi_i),\\
    &\psi_{i,t_n}=-(B_k+\sum_{j=1}^N \phi_j\partial_q^{-1}\psi_j)^{\frac n k *}_+(\psi_i), \quad i=1,\cdots,N,
  \end{eqnarray}
\end{subequations}
which is the constrained $q$-deformed KP hierarchy. Some solutions
of the constrained $q$-deformed KP hierarchy can be represented by
$q$-deformed Wronskian determinant (see \cite{HLC} and the
references therein).

\parindent=0pt
{\bf Remark 4} In \cite{GGS2005,BSS2008}, the $k$-constrained $q$-KP
hierarchy can be constructed from
        the $q$-KP hierarchy by imposing the $k$-constraint. Here, the $k$-constrained $q$-KP hierarchy
        is obtained directly from the {\it extended} $q$-KP hierarchy (\ref{eqns:nmcqKP}) by dropping the $\tau_k$ dependence
        due to the $k$-constraint.

\begin{eg} [The second type of $q$-KdVSCS ($q$-KdVSCS-II)]
For $n=3$ and $k=2$, (\ref{eqs:k-constrained-q}) gives rise to the
second type of $q$-deformed KdV equation with self-consistent
sources ($q$-KdVSCS-II).
\begin{subequations}
\label{eqns:qKdVSCS-II}
\begin{eqnarray}
  & v_{1,t_3}+f_1-\sum_{i=1}^N h_{i1}=0, \\
  & v_{0,t_3}+f_0-\sum_{i=1}^N h_{i0}=0, \\
  & u_2+\theta(u_2) +\partial_q(u_1)+u_0 u_1
    +u_1 \theta^{-1}(u_0)   -\sum_{i=1}^N r_{i1}=0, \\
  & \phi_{i,t_3}=(\partial_q^3+s_2 \partial_q^2+s_1 \partial_q+s_0) (\phi_i), \\
  & \psi_{i,t_3}= - (\partial_q^3+s_2 \partial_q^2+s_1 \partial_q+s_0)^* (\psi_i),\qquad i=1,\cdots,N.
\end{eqnarray}
\end{subequations}
Let $q\rightarrow 1$ and $u_0\equiv 0$, then the $q$-KdVSCS-II
(\ref{eqns:qKdVSCS-II}) reduces to the second type of KdV equation
with self-consistent sources (KdVSCS-II or Yajima-Oikawa equation)
which reads as
\begin{eqnarray*}
  & u_2 = -\frac{1}{2} u_{1 ,x}+\frac{1}{2} \sum_{i=1}^N \phi_i  \psi_i, \\
  & u_{1 ,t_3}=\frac{1}{4}  u_{1 ,xxx}  +3 u_1   u_{1 ,x}
  +\frac{3}{4} \sum_{i=1}^N (\phi_{i ,xx} \psi_i -\phi_i  \psi_{i ,xx}), \\
  & \phi_{i ,t_3}= \phi_{i ,xxx}
  +3 u_1   \phi_{i ,x}+\frac{3}{2}  u_{1 ,x} \phi_i
  +\frac{3}{2} \phi_i \sum_{j=1}^N \phi_j  \psi_j, \\
  & \psi_{i ,t_3}= \psi_{i ,xxx}
  +3 u_1   \psi_{i ,x}+\frac{3}{2}  u_{1 ,x} \psi_i
  -\frac{3}{2} \psi_i \sum_{i=1}^N \phi_j  \psi_j,\qquad i=1,\cdots,N. \\
\end{eqnarray*}
\end{eg}

\begin{eg} [The second type of $q$-BESCS ($q$-BESCS-II)]
For $n=2$ and $k=3$, (\ref{eqs:k-constrained-q}) gives rise to the
second type of $q$-deformed Boussinesq equation with self-consistent
sources ($q$-BESCS-II))
\begin{subequations}
\label{eqns:qBESCS-II}
\begin{eqnarray}
    & s_{2,t_2}-f_2=0, \\
    & s_{1,t_2}-f_1 -\sum_{i=1}^N g_{i1}=0, \\
    & s_{0,t_2}-f_0 -\sum_{i=1}^N g_{i0} =0 , \\
    &  \phi_{i,t_2}=(\partial_q^2+v_1 \partial_q+v_0 ) (\phi_i), \\
    & \psi_{i,t_2}= -(\partial_q^2+v_1 \partial_q+v_0 )^* (\psi_i),\qquad i=1,\cdots,N.
\end{eqnarray}
\end{subequations}
Let $q\rightarrow 1$ and $u_0\equiv 0$, then the $q$-BESCS-II
(\ref{eqns:qBESCS-II}) reduces to the second type of Boussinesq
equation with self-consistent sources (BESCS-II) which reads as
\begin{eqnarray*}
    & -2  u_{2,x}- u_{1,xx}+ u_{1,t_2}=0, \\
    & 3 u_{2 ,t_2}-3 u_{2 ,xx}  +3  u_{1 ,x,t_2}+6 u_1    u_{1 ,x}
  - u_{1 ,xxx}-2 \partial_x\sum_{i=1}^N \phi_i  \psi_i =0, \\
    & \phi_{i,t_2}=\phi_{i,xx}  +2 u_1 \phi_i, \\
    & \psi_{i ,t_2}= - \psi_{i ,xx}  -2 u_1  \psi_i,\qquad i=1,\cdots,N.
\end{eqnarray*}
\end{eg}

\section{Conclusions}

A method is proposed in this paper to construct a new extended
$q$-deformed KP ($q$-KP) hiearchy and its Lax representation. This
new extended $q$-KP hierarchy contains two types of $q$-deformed KP
equation with self-consistent sources ($q$-KPSCS-I and
$q$-KPSCS-II), and its two kinds of reductions give the $q$-deformed
Gelfand-Dickey hierarchy with self-consistent sources and the
constrained $q$-deformed KP hierarchy. Thus, the reductions of the
new extended $q$-KP hierarchy may give some $q$-deformed
$(1+1)$-dimensional soliton equation with self-consistent sources,
e.g., the two types of $q$-deformed KdV equation with
self-consistent sources (including $q$-deformed Yajima-Oikawa
equation) and two types of $q$-deformed Boussinesq equation with
self-consistent sources. All of these results reduce to the
classical ones when $q \rightarrow 1$. The method proposed in this
paper is a general way to find $(1+1)$- and $(2+1)$-dimensional
$q$-deformed soliton equation with self-consistent sources and their
Lax representations.

\section*{Acknowledgments}
This work is supported by National Basic Research Program of China
(973 Program) (2007CB814800) and National Natural Science Foundation
of China (grand No. 10601028 and 10801083). RL Lin is supported in
part by Key Laboratory of Mathematics Mechanization.

\end{document}